\documentclass[11pt]{article}

\usepackage[preprint]{acl}

\usepackage{times}
\usepackage{latexsym}

\usepackage[T1]{fontenc}
\usepackage[utf8]{inputenc}

\usepackage{microtype}

\usepackage{inconsolata}

\usepackage{graphicx}

\newcommand{\myProj}{P$^2$RAG}

\usepackage{amsmath}
\usepackage{amssymb}
\usepackage{bm}
\def\eqref#1{equation~\ref{#1}}
\def\1{\bm{1}}

\def\va{{\bm{a}}}
\def\vb{{\bm{b}}}

\def\vd{{\bm{d}}}

\def\vp{{\bm{p}}}

\def\vx{{\bm{x}}}

\DeclareMathAlphabet{\mathsfit}{\encodingdefault}{\sfdefault}{m}{sl}
\SetMathAlphabet{\mathsfit}{bold}{\encodingdefault}{\sfdefault}{bx}{n}

\def\sF{{\mathbb{F}}}
\def\sG{{\mathbb{G}}}

\renewcommand{\cite}[1]{\citep{#1}}
\newcommand{\yrcite}[1]{\citet{#1}}

\usepackage{booktabs}

\usepackage[capitalize,noabbrev]{cleveref}

\usepackage{amsthm}
\theoremstyle{plain}
\newtheorem{theorem}{Theorem}[section]

\newtheorem{claim}{Claim}[section] % MODIFIED
\theoremstyle{definition}
\newtheorem{definition}[theorem]{Definition}

\theoremstyle{remark}

\usepackage{algorithm}
\usepackage{algorithmic}

\usepackage{etoolbox}
\AtBeginEnvironment{algorithmic}{\small}

\usepackage{enumitem}
\setlist[itemize]{leftmargin=*}
\setlist[enumerate]{leftmargin=*}
\setlist{noitemsep}
\setlist{nolistsep}

\newcommand{\revision}[1]{}
\newcommand{\original}[1]{}
\newcommand{\reviewer}[1]{}

\makeatletter
\edef\mathbreak@comma{\mathchar\number\mathcode`\,\relax}
\edef\mathbreak@lparen{\mathchar\number\mathcode`(\relax}
\edef\mathbreak@rparen{\mathchar\number\mathcode`)\relax}
\edef\mathbreak@lparen@code{\the\mathcode`(}
\@ifpackageloaded{amsmath}{%
  \def\resetMathstrut@{%
    \begingroup
    \setbox\z@\hbox{%
      \mathchardef\@tempa\mathbreak@lparen@code\relax
      \def\@tempb##1"##2##3{\the\textfont"##3\char"}%
      \expandafter\@tempb\meaning\@tempa \relax
    }%
    \edef\@tempa{%
      \ht\Mathstrutbox@\the\ht\z@\relax
      \dp\Mathstrutbox@\the\dp\z@\relax}%
    \expandafter\endgroup\@tempa
  }%
}{}
\begingroup\lccode`\~=`\,\lowercase{\endgroup\def~}{\mathbreak@comma\allowbreak}
\begingroup\lccode`\~=`(\lowercase{\endgroup\def~}{\mathbreak@lparen\allowbreak}
\begingroup\lccode`\~=`)\lowercase{\endgroup\def~}{\mathbreak@rparen\allowbreak}
\mathcode`\,="8000
\mathcode`(="8000
\mathcode`)="8000
\makeatother

\title{{\myProj}: Efficient Privacy-Preserving RAG Service \\
Supporting Arbitrary Top-$k$ Retrieval}

\author{
 \textbf{Yulong Ming\textsuperscript{1}},
 \textbf{Mingyue Wang\textsuperscript{2}},
 \textbf{Jijia Yang\textsuperscript{1}},
 \textbf{Jie Xu\textsuperscript{1}},
\\
 \textbf{Zihan Wu\textsuperscript{1}},
 \textbf{Cong Wang\textsuperscript{1}},
 \textbf{Xiaohua Jia\textsuperscript{1}}
\\
\\
 \textsuperscript{1}City University of Hong Kong,
 \textsuperscript{2}Peng Cheng Laboratory
\\
 \small{
   \textbf{Correspondence:} \href{mailto:myl.7@my.cityu.edu.hk}{myl.7@my.cityu.edu.hk}
 }
}

\begin{document}
\maketitle

% ADDED
\begin{abstract}
Retrieval-Augmented Generation (RAG) enables large language models to use external knowledge, but outsourcing the RAG service raises privacy concerns for both data owners and users.
Privacy-preserving RAG systems address these concerns by performing secure top-$k$ retrieval, which is typically implemented using secure sorting to identify relevant documents.
However, existing systems face challenges supporting arbitrary $k$ due to their inability to change $k$, new security issues, and in particular, efficiency degradation with large $k$.
This is a significant limitation because applications such as finance, law, and healthcare require a $k$ that is large enough to cause huge overhead for existing systems. Also, modern long-context models generally achieve higher accuracy with larger retrieval sets.
We propose {\myProj}, an efficient privacy-preserving RAG service that supports arbitrary top-$k$ retrieval.
Unlike existing systems, {\myProj} avoids sorting candidate documents.
Instead, it uses an interactive bisection method to determine the set of top-$k$ documents.
For security, {\myProj} uses secret sharing on two semi-honest non-colluding servers to protect the data owner's database and the user's prompt.
It enforces restrictions and verification to defend against malicious users and tightly bounds the information leakage of the database.
The experiments show that {\myProj} is 3--300$\times$ faster than the state-of-the-art PRAG for $k = 16$--$1024$.
\end{abstract}

\section{Introduction}

Large Language Models (LLMs) are powerful tools, but they suffer from limitations, such as hallucinations \cite{hallucination} and lack of real-time or domain-specific data.
Retrieval-Augmented Generation (RAG) \cite{rag} provides a powerful solution to these issues without the high cost and complexity of fine-tuning.
By retrieving relevant information from an external knowledge base and combining it with the user's prompt, RAG enables the model to generate responses that are accurate, up-to-date, and factually grounded \cite{rag_example,jie}.
\looseness=-1

RAG as a Service (RaaS) has recently emerged as a popular paradigm.
In this architecture, the data owner outsources its proprietary database to the RAG service.
The user sends a prompt to this service to retrieve the top-$k$ relevant documents from the database.
The user then submits the prompt, augmented by these documents, to the model to generate responses.
However, this workflow raises privacy issues.
The privacy of both the data owner and the user depends on the RAG service's honesty, as it can access the data owner's database and the user's prompt.
Moreover, malicious users can attempt to extract proprietary information from the data owner's database.
\looseness=-1

Privacy-preserving RAG has been proposed to address these issues.
Existing systems treat privacy-preserving RAG as a secure top-$k$ retrieval problem.
They use techniques from secure $k$-Approximate Nearest Neighbor ($k$-ANN) or secure sorting to compute similarity scores, sort the candidates, and select the top-$k$ documents.
\looseness=-1

However, existing systems face challenges supporting arbitrary $k$.
While some do not support dynamic adjustments to $k$ for user queries or raise security issues, most suffer from efficiency degradation with a large $k$ \cite{pann,prag}.
In deployment scenarios such as finance, law, and healthcare, engineering best practice suggests $k = 100$--$150$ \cite{ragk}, causing huge overhead for existing privacy-preserving RAG systems.
Meanwhile, recent studies show that within a particular yet large context threshold, such as 64K tokens, RAG with a large $k$ generally outperforms RAG with a smaller $k$ for modern long-context models \cite{lc_good,lc_good1,lc_good2}.
This finding implies that $k$ can be increased in RAG, which trades model inference efficiency for accuracy.
A large $k$ is also required by applications such as cache-augmented retrieval \cite{cag,cag1}.

\textbf{System overview.}
We propose {\myProj} to address this challenge.
Unlike existing systems, {\myProj} avoids sorting candidate documents.
Instead, we use an interactive bisection method between the user and the servers to determine a threshold.
We identify documents within the threshold as the top-$k$ results.
This design enables the user to select an arbitrary $k$ by selecting the threshold.
Moreover, for RAG applications, the user only requires the set of the top-$k$ documents rather than their specific internal order.
Therefore, with this design, the documents within or outside of the determined threshold remain unsorted.
This method decreases the number of comparisons.
Because comparisons are the primary computation and communication bottleneck in secure protocols, {\myProj} reduces these costs.

{\myProj} protects both the data owner's database and the user's prompt.
We use secret sharing to ensure that the semi-honest RAG service cannot extract the database or the prompt during the workflow.
{\myProj}'s protocol runs on two semi-honest non-colluding servers where the database and the prompt are secret-shared.
Neither server can know any information about the database or the prompt.
Moreover, to defend against malicious users, we design restrictions and verifications for the interactions between servers and users.
We limit the number of result documents, limit the number of bisection iterations, and verify that the final retrieval of textual documents matches the bisection results.
The amount of leakage that a malicious user can extract from the database is tightly bounded.

In summary, our contributions are as follows:

\begin{itemize}
    \item We propose {\myProj}, a privacy-preserving RAG system that supports arbitrary top-$k$ retrieval.
    {\myProj} protects both the data owner's database and the user's prompt.
    {\myProj} uses an interactive bisection method to determine the top-$k$ set without sorting all candidate documents, enabling efficient retrieval even for a large $k$.
    \item We design {\myProj}'s security protocol to defend against both the semi-honest RAG service and malicious users.
    We use secret sharing to protect data privacy against semi-honest servers.
    We enforce restrictions and verification mechanisms to defend against malicious users and tightly bound the information leakage of the database.
    \item We implement {\myProj} and evaluate its performance.
    The code is at
    % \url{https://anonymous.4open.science/r/p2rag}
    \url{https://github.com/myl7/p2rag}
    and licensed under Apache License, Version 2.0.
    The experiments show that {\myProj} is efficient, achieves higher performance with larger $k$, and is 3--300$\times$ faster than the state-of-the-art system, PRAG, for $k = 16$--$1024$.
    \looseness=-1
\end{itemize}

\section{Problem Formulation}
\label{sec:problem}

\subsection{System Model}
\label{sec:problem_system}

{\myProj} aims to provide privacy-preserving RAG as a service to the data owner and users.
During the offline (i.e., preprocessing) stage, the data owner outsources its data to {\myProj} as an encrypted database.
During the online stage, users use encrypted prompts to query top-$k$ relevant documents from the encrypted database.

The data owner's database contains $N$ documents.
Each textual document corresponds to an $m$-dimensional document embedding, i.e., a vector.
Each embedding is generated from the text using an embedding model.
The embedding semantically describes the corresponding text, so that the distance between two embeddings can quantify the similarity of the two texts.
We use the cosine distance as the distance metric, which is widely used in existing RAG applications.
We assume all embeddings are $\ell_2$-normalized.

{\myProj} consists of two servers.
We choose the two-server model over more servers because we can use the efficient cryptographic tools optimized for two servers.
To outsource the database to the servers, the data owner secret-shares it.
That is, for each document embedding, the two servers hold the data owner's two shares, respectively.
The sum of the two shares is the embedding, and a single share reveals no information about it.
Both texts and embeddings are secret-shared in this way.

A user has a textual prompt and a prompt embedding.
The prompt embedding is generated from the text using the same embedding model.
We assume that the embedding model is public.
We also assume the user can get the prompt embedding using the model in advance, as in prior privacy-preserving RAG work \cite{prag,remoterag}.
To use {\myProj}, the user secret-shares its prompt embedding and sends the two shares to the two servers, respectively.
By running {\myProj}'s protocol, the user finally receives the indices of $k$ documents from the servers.
These documents have the $k$ highest cosine similarities to the prompt text.
\looseness=-1

\subsection{Threat Model}
\label{sec:problem_threat}

The two servers are semi-honest and non-colluding.
That is, the servers honestly follow {\myProj}'s protocol but passively attempt to infer information about the database and prompts, i.e., honest-but-curious.
The servers do not share any extra information beyond what is defined in {\myProj}'s protocol.

This assumption is common and widely used in other secret-sharing-based security systems \cite{non_colluding,ppi1}.
We quote a statement by Microsoft \cite{non_colluding_setting} to show some example deployment scenarios: 1) collusion may be physically infeasible, too costly, prevented by law, or blocked by conflicting business interests between the parties; and 2) when independent attackers separately compromise different systems, they may lack the capacity or opportunity to coordinate with each other.
For realistic product deployments, Signal's SecureValueRecovery2 uses hardware enclaves from different platforms for this model, avoiding trusting only one hardware manufacturer. Coinbase's WaaS uses its server and the user's device for this model, keeping secure when the user's device is compromised \cite{non_colluding_deployment}.

Users are malicious. Users can deviate from {\myProj}'s protocol arbitrarily to extract information from the database.
Users do not trust the servers. Data owners are honest. They do not trust any server or any user.
\looseness=-1

\subsection{Security Goals}
\label{sec:problem_goal}

We aim to protect the privacy of both the data owner and the user, i.e., protect both the database and the prompt, resulting in the following goals.

\begin{itemize}
    \item \textit{Privacy}. No server can learn any information about any prompt or document. In particular, because the returned $k$ documents are the $k$ most similar to the prompt and thus can reveal information about the prompt, we must also protect them from the servers.
    \item \textit{Bounded database leakage}. No user can learn any extra information about any document beyond the ``baseline'' leakage and a small amount of leakage defined in {\myProj}'s protocol.
    The ``baseline'' leakage refers to the $k$ documents returned to the user, which is necessary for {\myProj}'s functionality.
    {\myProj} also leaks counts of documents that are in a particular range to accelerate {\myProj}'s protocol.
    This aggregate leakage does not identify any particular document, and the total number of leaked counts is limited, resulting in $O(\log^2 N)$ leakage.
\end{itemize}

\section{Preliminary}
\label{sec:preliminary}

We use \yrcite{secret_share} secret sharing over a prime field $\sF_p$ to protect a value.
We denote both shares of a value $x$ by $[x]$.
The shares are held by the servers, respectively.
We denote each share by $[x]_i$ for $i \in \{0, 1\}$.
That is, we protect a value $x$ by secret-sharing it as $[x]_0 + [x]_1 = x$ and sending $[x]_i$ to the server $i$, respectively.
A single share $[x]_i$ does not reveal any information about $x$.
We have $[x \pm y] = [x] \pm [y]$.
That is, for addition and subtraction, each server only needs to compute on its local shares, and no interaction is required.

Distributed Comparison Functions (DCFs) \cite{fss0,dpf,dcf_cmp,dcf} are schemes to secret-share a comparison function.
Each function share can be individually executed on a server.
The outputs of the function shares are the shares of the output of the original function.

\begin{definition}[Comparison Functions]
    For the input domain $\sG_{in} = \{0, \cdots, p - 1\}$, a group $(\sG_{out}, +)$ as the output domain, $a \in \sG_{in}$, and $b \in \sG_{out}$, a comparison function $f^<_{a, b}$ is a function that for any input $x$, the output $y$ has $y = b$ only when $x < a$, otherwise $y = 0$.
\end{definition}

\begin{definition}[Distributed Comparison Functions]
    \label{def:dcf}
    For the input domain $\sG_{in} = \{0, \cdots, p - 1\}$, the output domain $(\sG_{out}, +)$, $a \in \sG_{in}$, $b \in \sG_{out}$, and a security parameter $\lambda$, a DCF is a scheme consisting of the methods:

    \begin{itemize}
        \item Key generation: $Gen(1^\lambda, a, b) \rightarrow (k_0, k_1)$.
        \item Evaluation: $Eval(k_i, j) \rightarrow y_{i,j}$ for any $i \in \{0, 1\}$ and any $j \in \sG_{in}$.
    \end{itemize}

    That satisfies:

    \begin{itemize}
        \item Correctness: $y_{0,j} + y_{1,j} = b$ only when $j < a$ , otherwise $y_{0,j} + y_{1,j} = 0$.
        \item Privacy: Neither $k_0$ nor $k_1$ reveals any information about $a$ and $b$.
        Formally speaking, there exists a Probabilistic Polynomial Time (PPT) simulator $Sim_{dcf}$ that can generate output computationally indistinguishable from any strict subset of the keys output by $Gen$.
    \end{itemize}
\end{definition}

We use DCFs and \yrcite{dcf_cmp}'s interval containment gate to check if a secret-shared value $[x]$ is in an interval $[x_l, x_r)$ as \cref{alg:cmp}.
Its output is $[1]$ when $x \in [x_l, x_r)$, otherwise $[0]$.
Note that this algorithm handles the addition in $\sG_{in}$, which is an integer addition modulo $p$.
\looseness=-1

\begin{algorithm}[htb]
    \caption{Comparison: \small $[0/1] \gets Cmp([x], [x_l, x_r])$
    }
    \label{alg:cmp}
    \begin{algorithmic}
        \STATE \textbf{The key generation method} $Cmp.Gen([x_l, x_r))$:
        \STATE Sample random $r \in \sF_p$ and distribute $[r]$.
        \STATE $x'_l = x_l + r$, $x'_r = x_r + r$.
        \STATE $(k_0^l, k_1^l) \gets Gen^<(1^\lambda, x'_l, p - 1)$. Distribute $(k_0^l, k_1^l)$.
        \STATE $(k_0^r, k_1^r) \gets Gen^<(1^\lambda, x'_r, 1)$. Distribute $(k_0^r, k_1^r)$.
        \STATE Sample random $[w] \in \sF_p$ s.t. $[w]_0 + [w]_1 = 1\{x'_l > x'_r\}$. Distribute $[w]$.
        \STATE
        \STATE \textbf{The evaluation method} $Cmp.Eval([x])$:
        \STATE Publish $x + r$. $x$ is masked by $r$ and thus protected.
        \STATE $[y]_i^l = Eval(i, k_i^l, x)$, $[y]_i^r = Eval(i, k_i^r, x)$.
        \STATE $[y]_i = [y]_i^l + [y]_i^r + [w]_i$. Output $[y]$.
    \end{algorithmic}
\end{algorithm}

We assume that there is a trusted dealer that generates the shared random values required by {\myProj}, which is common in security systems \cite{prod,trusted_dealer}.
The trusted dealer runs during the offline stage and does not participate in the online stage, so we do not benchmark its performance.
For example, the data owner can be the trusted dealer.
There are also other techniques that can achieve it \cite{trusted_dealer_alt,trusted_dealer_alt1,jie1}.
\looseness=-1

\section{System Design}
\label{sec:design}

A user willing to use {\myProj} first secret-shares its prompt embedding as $[\vp]$.
It then sends the two shares to the two servers, respectively.
The servers first run the distance calculation to calculate the distances $[d_j]$ between the prompt embedding $[\vp]$ and each document embedding $[\vx_j]$.
The servers then run the interactive distance bisection method with the user.
In each iteration, the servers return the count $[c]$ of documents with distances less than a user-specified distance threshold $[d_k]$.
The user then updates $d_k$ by comparing the returned number $c$ with the target $k$.
When $c$ is close enough, as defined by {\myProj}, or the iteration number reaches a server-specified limit, the servers return a secret-shared $0/1$ array $\vd_c$ where the elements corresponding to the in-range documents are set to $1$.
The number of these elements, i.e., $c$, is also limited by a server-specified threshold.
Finally, during the text retrieval, the user retrieves the indices of all $1$ elements without letting the servers know about the retrieved indices.
The servers also verify that the user only retrieves the indices that match the $0/1$ array $\vd_c$.
The workflow is shown as \cref{fig:workflow}.
The full protocol is summarized in \cref{app:protocol}, with theoretical performance analysis in \cref{app:analysis_perf}.

\begin{figure*}[t]
    \begin{center}
    \includegraphics[width=\textwidth]{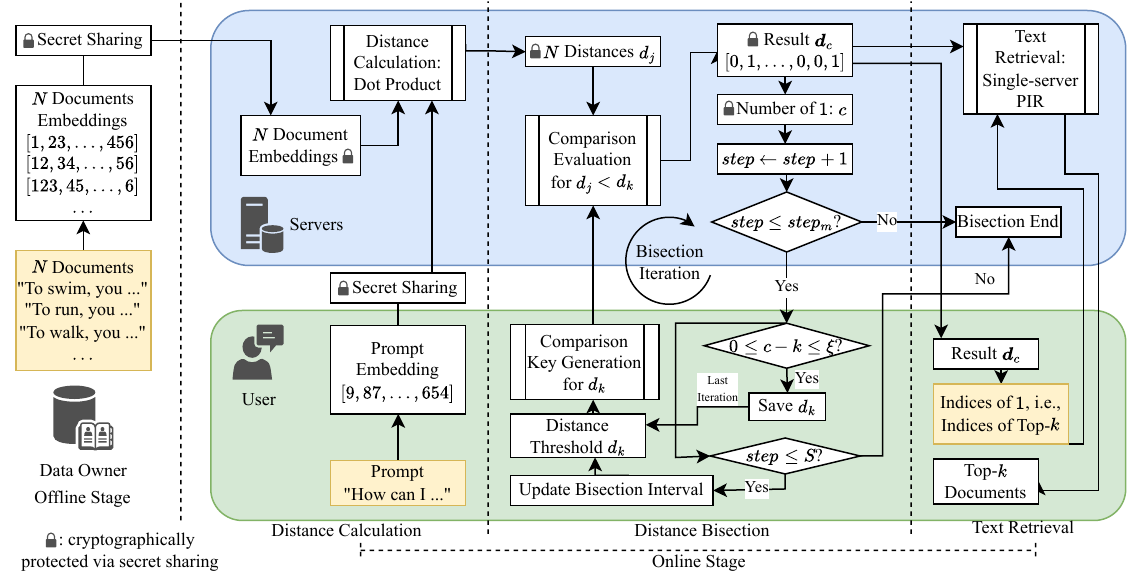}
    \caption{
        The workflow of {\myProj}. During the offline stage, the data owner sets up the secret-shared database. During the distance calculation, the servers compute the secret-shared distances between each document and the user's prompt. During the distance bisection, the user determines a distance threshold $d_k$ for the top-$k$ documents. The bisection iteration ends when $d_k$ is found, or the number of iterations exceeds an upper bound. During the text retrieval, the user retrieves textual documents using the indices of the top-$k$ documents.
    }
    \label{fig:workflow}
    \end{center}
\end{figure*}

\subsection{Distance Calculation}
\label{sec:design_distance}

The servers compute the cosine distance $[d_j]$ between the prompt embedding and each document embedding.
The document embedding must be $\ell_2$-normalized by the data owner.
We assume the $\ell_2$-norms of the database $l = \|\vx_j\|$ and the prompt $l_u = \|\vp\|$ are public because they do not leak any information about the database or the prompt.
Note that the dot product $d_j = \vp \cdot \vx_j$ differs from the cosine distance only by the factor $l l_u$.
The dot product outputs the same ranking as the cosine distance because the factor is constant.
Therefore, we use the dot product as $d_j$.
We use \yrcite{prod} triples for each dimension to compute $[d_j] = [\vp \cdot \vx_j]$ from $[\vp]$ and $[\vx_j]$ as \cref{alg:dot_prod}.
This algorithm has $O(N)$ communication costs, but the amount is $4N$ integers and small, e.g., 16MB for $N = 2^{20}$ and 64-bit integers.
\looseness=-1

Current embedding models output floating-point numbers that are $\ell_2$-normalized to $1$.
To represent the embedding elements in $\sF_p$, the data owner scales them by a factor $2^f$ and truncates the fractional parts \cite{scale}, resulting in integers with precision $f$.
While the truncation causes the embeddings' norms to deviate slightly from $2^f$, this error is negligible, as we show in \cref{sec:exp_acc}.
This scale requires that the cardinality of $\sF_p$ must be greater than $2^{f + 1}$.
Note that the product of two scaled integers has $2f$ precision.
Therefore, we are required to truncate every product to avoid causing an overflow for the field $\sF_p$ \cite{prag}.
That is, for integers $a, b$, we use the integer part of $ab / 2^f$ as the product.
\looseness=-1

% Fix cref err
% \cref{def:dcf}
Definition~\ref{def:dcf}
requires the input $a \in \sG_{in}$ to be an unsigned integer because the internal comparison is performed bit by bit, starting from the most significant bit \cite{dcf}.
% use Boyle et~al.'s idea \yrcite{dcf_cmp}
We extend \cref{alg:cmp} to signed integers by giving $\sF_p$ a new bit representation.
The elements $-\lfloor p/2 \rfloor, \cdots, \lfloor p/2 \rfloor$ are represented as the bits of the unsigned integers $0, \cdots, p - 1$.
This bit representation transformation does not change any algorithm step.

\subsection{Distance Bisection}
\label{sec:design_bisection}

Given all $[d_j]$, we aim to obtain their top-$k$ values.
The naive method is to let the servers return all $d_j$ to the user, who then sorts them.
However, though no embeddings are given to the user, letting the user know all $N$ distances leaks information about the database and can lead to data recovery \cite{leak_distance,leak_distance2,leak_distance1}.
Existing systems avoid this by sorting $[d_j]$ on the servers, which is inefficient in security protocols.

Unlike existing systems, {\myProj} runs an interactive bisection method between the servers and the user to determine a distance threshold $d_k$.
The user starts with the interval $[-l l_u, l l_u]$ and $d_k = 0$.
During each bisection iteration, the servers compare each distance $[d_j]$ with $[d_k]$ and return to the user the count of distances $c$ that are less than $d_k$.
To compute $[c]$, the servers compare $[d_k]$ with each $[d_j]$ using \cref{alg:cmp} with the interval $[d_k, p)$.
That is, the user runs the key generation method $Cmp.Gen([d_k, p))$, distributing $k_i^l, k_i^r, [w]_i$ to the server $i$, respectively.
The servers run the evaluation method $Cmp.Eval([d_j])$ for each $[d_j]$, resulting in $[0]/[1]$.
All $[0]/[1]$ outputs of all $[d_j]$ form a vector $[\vd_c]$.
The servers sum their elements to $[c]$ and return $c$ to the user.
To avoid carrying, the cardinality of $\sF_p$ must be greater than $N$.
The user compares $c$ with $k$ to update the interval and $d_k$.
When either the user or the servers stop the iteration, as defined below, the servers run one more iteration with a particular $d_k$ saved by the user and return $\vd_c$ instead of $c$ to the user.
\looseness=-1

The servers must limit the number of iterations.
At each iteration, the user receives the number $c$ as leakage.
This small leakage is aggregated from all documents and does not identify any particular document.
However, this leakage can accumulate during users' queries.
The servers limit the number of iterations with an upper bound $step_m$ to limit the total amount of leakage.

The user must run a fixed number of iterations before stopping.
Otherwise, the server can infer information about the user's prompt from the stopping time.
For example, if the number of iterations is large, the user's prompt embedding is more likely to lie in a region with denser document embeddings.
Using a fixed number of iterations makes the user's behavior independent of the prompt.
This number can still depend on $N$ and $k$ because $N$ and $k$ are public.
We set the number of iterations to $S = \lceil \log_2 (N / k) \rceil$ and assume $S \le step_m$.
These $S$ iterations may not always be sufficient to obtain the exact top-$k$ results with bisection.
However, 1) the experiments in \cref{app:iter_num} show that this value is sufficient for all datasets, and 2) the user can continue the bisection in the next query because, under our privacy goal, no server learns information about the prompt during the current query.
\looseness=-1

The user can accept an intermediate iteration result rather than the final result with $c = k$ to speed up the query because: 1) in the bisection method, as the step size becomes very small in later iterations, further steps produce little change.
Therefore, an intermediate iteration may still output a result close to the true solution.
Moreover, 2) recent studies show that within a particular yet large context threshold, such as 64K tokens, retrieving slightly more results in RAG generally outperforms RAG with a smaller $k$ for modern long-context models \cite{lc_good,lc_good1,lc_good2}.
We model this intermediate acceptance as a slack value $\xi$ chosen by the user.
The user can accept the iteration result that has $k \le c \le k + \xi$.
As a result, for $k' = k + \xi$, we change $S$ to $\lceil \log_2 (N / k') \rceil$.
We will keep using this $S$ definition below.
\looseness=-1

After termination, the servers must limit the number of $[1]$ in the final $[\vd_c]$, i.e., limit the maximum $[c]$, because $[\vd_c]$ controls which indices can be retrieved by the user during the text retrieval.
A user can use $d_k = ll_u + 1$ to get a $[\vd_c]$ whose elements are all $1$, resulting in accessing any document.
The servers perform this check by publicly setting up an upper bound $c_m$ and comparing the final $[c]$ with $c_m$ using \cref{alg:cmp}.
Unlike the comparison during iterations, the key generation method of this comparison is executed offline by the trusted dealer.

A malicious user can deviate from the protocol to manipulate $[\vd_c]$.
When the servers return $c$, this manipulation does not give the user advantages over the leakage of $c$.
When the servers return $\vd_c$, the servers must also verify that all elements of $[\vd_c]$ are $[0]/[1]$.
To achieve that, the servers compare each element with $2$ using \cref{alg:cmp}.
The trusted dealer now executes the key generation method for this comparison offline.
\looseness=-1

\subsection{Text Retrieval}
\label{sec:design_text}

Given $\vd_c$, the user retrieves the indices of all its $1$ elements to get documents using Private Information Retrieval (PIR).
Because the database is stored with secret sharing, the user uses single-server PIR \cite{single_pir2,single_pir1} to retrieve the documents from both servers.
Following prior work \cite{kann,prag}, we assume that an existing single-server PIR protocol can be used and do not include it in comparisons with prior work.
We benchmark its performance additionally in \cref{app:text_retrieval}, taking SimplePIR \cite{single_pir} as an instance.
Another requirement is that the servers must verify that the user only retrieves the indices that match $[\vd_c]$.
We observe that in single-server PIR, e.g., SimplePIR \cite{single_pir}, there are operations to multiply the database by the query, where each document text is converted to a number.
We multiply this result by $[\vd_c]$ to enforce the requirement.
\looseness=-1

\section{Security Analysis}
\label{sec:analysis_sec}

In this section, we prove {\myProj}'s security attributes.
Our security analysis focuses on the security goals described in \cref{sec:problem_goal}.
We use the simulation-based method \cite{uc} to prove that {\myProj} achieves the privacy goal.
We quantify the information leakage of the database for the bounded database leakage goal.
Additional security analysis for multiple queries and attacks enabled by the database leakage is available in \cref{app:more_sec}.

\begin{claim}[{\myProj}'s Privacy]
    For a Probabilistic Polynomial Time (PPT) adversary $\mathcal{A}$ corrupting the server $i*$ for $i* \in \{0, 1\}$ but following the protocol, \cref{alg:full_online} guarantees that $\mathcal{A}$ learns no information about the prompt and the database.
    \looseness=-1
\end{claim}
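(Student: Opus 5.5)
We prove the claim in the simulation paradigm: for a semi-honest adversary $\mathcal{A}$ corrupting server $i^*$, we build a PPT simulator $Sim$ whose output is computationally indistinguishable from $\mathcal{A}$'s real view in \cref{alg:full_online}. $Sim$ receives no input about the prompt $\vp$ or the database $\{\vx_j\}$. It receives only the public parameters $N, m, k, \xi, l, l_u, f, p, step_m, c_m$, and therefore also the fixed iteration count $S=\lceil \log_2(N/k')\rceil$.

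First we list the view of server $i^*$, phase by phase. It consists of:
\begin{itemize}
    \item the shares $[\vx_j]_{i^*}$ and the text shares, received offline;
    \item the share $[\vp]_{i^*}$ from the user;
    \item the \yrcite{prod} triple shares, together with the opened masked differences $\vp-\va$ and $\vx_j-\vb$ in \cref{alg:dot_prod};
    \item in each of the $S+1$ bisection rounds, the user-generated keys $k^l_{i^*}, k^r_{i^*}$, the shares $[r]_{i^*}, [w]_{i^*}$, and the opened values $d_j+r$;
    \item the dealer-generated keys and masks for the checks against $c_m$ and against $2$, with their opened masked inputs;
    \item the server's PIR view.
\end{itemize}
The opened count $c$ and the final vector $\vd_c$ go only to the user, so they are not in this view.

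Second, $Sim$ replaces every share by a uniform element of $\sF_p$. This is perfect, because a single additive share is uniform and independent of the secret. $Sim$ also replaces every opened masked value ($\vp-\va$, $\vx_j-\vb$, $d_j+r$, and the masked inputs of the two dealer comparisons) by a uniform element. This is also perfect, since each mask is fresh, uniform and unknown to server $i^*$.

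For the DCF keys, $Sim$ calls $Sim_{dcf}$ from Definition~\ref{def:dcf}. By the privacy property, a single key $k_{i^*}$ is computationally indistinguishable from the simulated one, whatever $x'_l, x'_r$ (and thus $d_k$) are. $Sim$ samples $[w]_{i^*}$ uniformly, which hides $1\{x'_l>x'_r\}$.

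The number of rounds is the public value $S$, plus one final round. Because of this, the transcript's length and timing are independent of the prompt, and $Sim$ can produce exactly $S+1$ rounds. For text retrieval, $Sim$ runs the simulator of the assumed secure single-server PIR. The multiplication by $[\vd_c]$ is carried out on a uniformly random share, so it adds nothing to the view.

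Third, we combine these pieces with a hybrid argument. Starting from the real view, we swap one component at a time: triple openings, then per-round masks, then DCF keys (one hybrid per key pair, $O(N+S)$ of them), then PIR. Each step is either perfectly identical or computationally indistinguishable by a reduction to DCF privacy or PIR privacy. Since there are polynomially many hybrids, the final simulated view is indistinguishable from the real one, and it does not depend on $\vp$ or $\{\vx_j\}$.

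The main obstacle is the independence of the masks across rounds and comparisons. Each opening $d_j+r$ must use a fresh $r$, and the same $r$ is reused over all $j$ within one round. We must argue that this reuse leaks only the differences $d_j-d_{j'}$ or nothing, depending on the design. If one $r$ is shared across all $j$, the differences $d_j-d_{j'}$ would be exposed. In that case we would interpret \cref{alg:cmp} as requiring a fresh $r_j$ per element, so that each opening is independently uniform, and check that the protocol's key generation supplies these per element. We handle this before the DCF hybrids so that the key reduction is clean.
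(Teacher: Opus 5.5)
Your simulation strategy is the paper's. You list the corrupted server's view, replace shares and masked openings by uniform elements of $\sF_p$, and use $Sim_{dcf}$ for $k^l_{i^*},k^r_{i^*}$ together with a uniform $[w]_{i^*}$; the hybrid bookkeeping and the PIR part add only formality and scope. The gap is exactly the point you defer in your last paragraph, and the check you propose there comes out negative.

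In \cref{alg:full_online} the user runs $Cmp.Gen([d_k,p))$ once per round, and the servers evaluate that single $cmpkey^{\ge d_k}$ on all $N$ inputs $[d_j]$. The cost accounting confirms this: $S$ calls to $Cmp.Gen$, $2S$ keys, and a $4224S$-byte upload term. Likewise, the dealer produces a single $cmpkey^{<2}$ that is applied to every entry of $\vd_c$.

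Because \cref{alg:cmp} publishes $x+r$, server $i^*$ sees $d_j+r$ for all $j$ under one $r$. It therefore learns every difference $d_j-d_{j'}$, i.e.\ the full similarity ranking of the database with respect to the prompt. It also sees $d_{c,j}+r$ for all $j$. For an honest user these take only the values $r$ and $r+1$, so the server learns exactly which documents are retrieved. No simulator without the prompt and database can reproduce these correlations. Hence the step ``replace each opening by an independent uniform value'' is invalid, and the claim fails for the protocol as written.

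The paper's proof does not close this gap. It writes the openings as $d_j+r^{(d_j)}$ and $d_{c,j}+r^{(d_{c,j})}$ with per-element masks, silently assuming exactly what you wanted to check. To obtain a real proof you must change the protocol explicitly, so that every compared value gets its own mask and key pair, each used once.

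\begin{itemize}
\item One option: the user secret-shares $d_k$, and the dealer pre-generates offline $N$ per-element keys per round for a sign test on $[d_j-d_k]$, plus $N$ per-element keys for the $<2$ check. Your component-wise simulation then goes through, with one DCF hybrid per key pair, i.e.\ $O(NS)$ hybrids rather than $O(N+S)$.
\item Having the user generate $N$ key pairs per round also works, but it gives up the claimed $O(\lambda\log^2 N)$ user cost and the small upload.
\end{itemize}

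The same one-time-mask rule applies to the Beaver openings: $\vx_j-\vb$ needs a per-document $\vb_j$. You should also list the published outcomes of the two verification checks. For an honest user they are constant, so they are trivially simulatable.
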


\begin{proof}
    The view of $\mathcal{A}$ includes:
    1) the prompt embedding share $[\vp]_{i*}$;
    2)$p_n - r^{(a)}_n, x_{j,n} - r^{(b)}_n$ for $n \in [0, m)$ during the distance calculation, where $r^{(a)}_n, r^{(b)}_n$ are random for $n \in [0, m)$;
    3) At most $min\{ step_m, \lceil \log_2 (N/(k + \xi)) \rceil \}$ keys $cmpkey^{\ge d_k}$ for different $d_k$, which are output by $Cmp.Gen$ and sent from the user;
    4) $d_j + r^{(d_j)}, c + r^{(c)}, d_{c,j} + r^{(d_{c,j})}$ during $Cmp.Eval$, where $r^{(d_j)}, r^{(c)}, r^{(d_{c,j})}$ are random.

    We construct a PPT simulator $Sim$ that generates a view computationally indistinguishable from what $\mathcal{A}$ learns by attacking the protocol, proving {\myProj}'s privacy goal.
    For 1), 2), and 4), $Sim$ outputs random values, which are indistinguishable because these values are either secret-shared or masked by random values.
    For 3), as each $cmpkey$ has $cmpkey = (k_{i^*}^l, k_{i^*}^r, [w]_{i^*})$, $Sim$ outputs a random value for $[w]_{i^*}$, which is indistinguishable because it is secret-shared.
    $Sim$ uses $Sim_{dcf}$'s output for $k_{i^*}^l, k_{i^*}^r$, which are indistinguishable due to DCF's privacy as described in \cref{sec:preliminary}.
    \looseness=-1
\end{proof}

\textbf{Bounded database leakage}.
We follow \yrcite{leakage} and \yrcite{kann}'s perspectives to divide the database leakage to the user into two incomparable and complementary parts:
1) physical leakage, which is the leakage of {\myProj}'s protocol beyond the top-$k$ results;
2) functional leakage, which is the leakage from the top-$k$ results themselves.
In particular, {\myProj}'s functional leakage is at most $k + \xi$ results, which are $O(k(m + \log N))$ bits of information, regardless of {\myProj}'s protocol \cite{kann}.
This leakage is mentioned as the ``baseline'' leakage in \cref{sec:problem_goal}.
{\myProj}'s physical leakage is at most $S = \lceil \log_2 (N / (k + \xi)) \rceil$ values of $c$ given to the user, where $0 \le c \le N$.
To distinguish between $N + 1$ possible outcomes (from $0$ to $N$), the maximum number of information bits contained in $c$ is $\log_2 (N + 1)$ and $O(\log N)$.
That is, {\myProj}'s physical leakage is $O(\log^2 N)$ bits.
\looseness=-1

We identify the $O(\log^2 N)$ leakage as small, proving {\myProj}'s bounded database leakage goal. Our justification is that this leakage is small compared to the ``baseline'' leakage. We use our experiment settings in \cref{sec:exp_perf}, which match our datasets and the RAG applications. We use these settings so that we can do numerical analysis. We have the following analysis: 1) For the complexity, because $O(\log N)$ ($N =$ 1K $\rightarrow$ 1M) grows slower than $O(k)$ ($k =$ 16 $\rightarrow$ 200), the $O(\log^2 N)$ leakage grows slower than the $O(k(m + \log N))$ ``baseline'' leakage. 2) For the amount, the $O(\log^2 N)$ leakage is $S\log_2(N + 1)$ bits for each query. Meanwhile, the ``baseline'' leakage for each query is $32mk'$ bits, i.e., $k'$ embeddings whose elements are 32-bit. For a setting where $k' = 16$, $m = 1024$, $N = 2^{17}$, and $S = 13$, the ``baseline'' leakage is far larger (more than $1000\times$) than the $O(\log^2 N)$ leakage.

\revision{\reviewer{xsNT-W1,astA-Q1} Count leakage can support targeted attacks, but those attacks stay inside the above bound. For membership inference, a user can submit the embedding of a suspected document and a very small threshold. This tests one target and costs one query. For distance inversion, the user can use bisection to locate the distance of a rank threshold to $O(step_m)$ bits of precision. These attacks reveal aggregate count information, not document identities, document contents, or full distance orderings. Prior database-recovery attacks that rely on returned neighbor IDs, access patterns, or sorted distances therefore cannot be applied directly, because {\myProj} reveals only counts during bisection and uses PIR for text retrieval.}

\section{Experiments}
\label{sec:experiments}

In this section, we benchmark {\myProj}'s accuracy and performance with experiments.
We show that:
1) for accuracy, it has no computational errors, and its results are relevant to the prompt;
2) for performance, it supports arbitrary $k$ and outperforms the state-of-the-art system, PRAG \cite{prag}.
We vary $k$ and the document number $N$.
We keep using $k' = k + \xi$ for the number of retrieved documents, allowing {\myProj} to retrieve a few more results than $k$.
The experiment settings are introduced in \cref{app:exp_settings}.
% \looseness=-1

% \subsection{Implementation Complexity}

% {\myProj}'s implementation has two layers. The cryptographic primitive layer includes the DCF scheme and \cref{alg:cmp}. We provide this layer as a library in the code artifact, with the main API exposed through \texttt{include/fss/dcf.h}. The application logic layer includes the interactive bisection protocol in \cref{alg:full_online}. This layer is about 400 lines of C in our artifact. It calls DCF and comparison routines as black boxes, so an implementer needs to write standard bisection, stopping, and verification logic, but does not need to implement new cryptographic primitives.

\subsection{Accuracy}
\label{sec:exp_acc}

{\myProj}'s retrieval is accurate after computing the distances.
It has small numerical errors introduced by converting embedding elements to integers and by the multiplication of the distance calculation.
We use BEIR's \texttt{trec-covid} dataset \cite{dataset}, which has 171332 documents.
We use BEIR's precomputed embeddings from Cohere/beir-embed-english-v3 \cite{dataset_emb}.
We measure the recall between the results from {\myProj}'s integer distances and the original distances, i.e., the dot products of the floating-point embeddings.
{\myProj}'s recall remains $1$ for $k' = 16$--$1024$, showing {\myProj}'s high computation accuracy.
We omit the recall figure because its curve is constant.
Because {\myProj} only considers cosine and dot product distances, we also measure the results' average relevance score to show whether {\myProj} retrieves the relevant documents, resulting in \cref{fig:acc}.
The relevance score is a human-annotated categorical level of relevance.
$0, 1, 2$ correspond to irrelevant, relevant, and highly relevant, respectively.
{\myProj}'s relevance score is larger than $1$ for $k' \le 300$, meaning all of the retrieved documents are generally relevant.

\begin{figure}[t]
    \begin{center}
        \centerline{\includegraphics[width=\columnwidth]{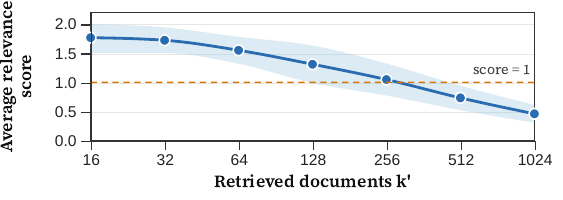}}
        \caption{Relevance score (higher is better) with varying numbers of retrieved documents $k'$}
        \label{fig:acc}
    \end{center}
    \vskip -0.2in
\end{figure}

\subsection{Performance}
\label{sec:exp_perf}

{\myProj}'s performance is independent of the particular document or prompt that is involved in its protocol.
This is a necessary attribute for a security system to defend against the timing attack \cite{timing}.
As a result, we use synthetic datasets for the experiments.
We benchmark {\myProj}'s end-to-end performance.
For \cref{alg:dot_prod} and DCFs used by \cref{alg:cmp}, we use multi-threading and start 96 threads.
Because the baseline PRAG does not support multi-threading, we run a batch of 96 instances and report the amortized time per instance.
\looseness=-1

\textbf{Computation costs}.
\cref{fig:perf} and \cref{tab:perf_exact} show that {\myProj} is 3$\times$ faster than PRAG when $k' = 16$ and changing $N$, and 7--300$\times$ faster when $N = 2^{14}$ and $k' = 32$--$1024$.
Both {\myProj} and PRAG's server time is linear in $N$ and $k'$, but {\myProj} becomes faster with larger $k'$.
We also compare {\myProj} with a non-secure RAG baseline in \cref{app:nonsecure}.
\looseness=-1

\begin{figure}[t]
    \begin{center}
        \centerline{\includegraphics[width=\columnwidth]{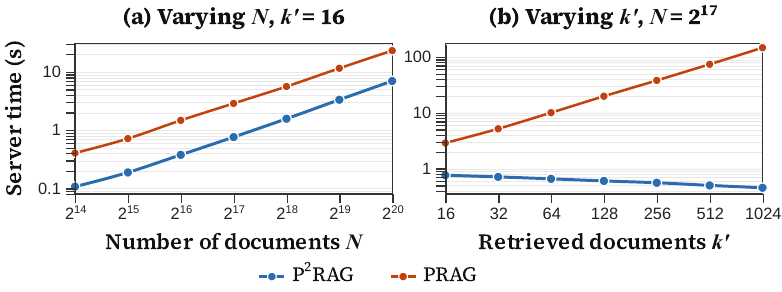}}
        \caption{Server time (lower is better) with varying numbers of documents $N$ and retrieved documents $k'$. $k' = 16$ and $N = 2^{17}$ for the two figures, respectively.}
        \label{fig:perf}
    \end{center}
\end{figure}

\begin{table}[t]
    \caption{Server time behind \cref{fig:perf}.}
    \label{tab:perf_exact}
    \begin{center}
    \begin{small}
    \begin{tabular}{lccc}
        \toprule
        $N, k'$ & {\myProj} & PRAG & Speedup \\
        \midrule
        $2^{14}, 16$ & 0.106s & 0.400s & 3.78$\times$ \\
        $2^{17}, 16$ & 0.751s & 2.84s & 3.78$\times$ \\
        $2^{20}, 16$ & 6.89s & 22.9s & 3.33$\times$ \\
        $2^{17}, 128$ & 0.594s & 19.5s & 32.9$\times$ \\
        $2^{17}, 1024$ & 0.448s & 144s & 322$\times$ \\
        \bottomrule
    \end{tabular}
    \end{small}
    \end{center}
\end{table}

\textbf{Communication costs}.
\cref{tab:comm} shows {\myProj}'s intra-server (IS) and user-server (US) communication volume, together with the number of Round-Trip Time (RTT).
While intra-server communication can reach hundreds of MB, this is well-tolerated via high-bandwidth data center links.
The user-server communication remains manageable, as even the largest one is comparable to only a few standard images.
We do not simulate the network but count the communication volume here because network conditions vary widely in secret-sharing systems and do not affect the system's throughput, as CPUs can run other computational work while waiting for transmission.
As an empirical evaluation for the communication time, we evaluate it in a typical network condition where the servers are in different countries in \cref{app:latency}.

\begin{table}[htb]
    \caption{Communication volume and number of RTTs.}
    \label{tab:comm}
    \begin{center}
    \begin{small}
    \begin{sc}
        \begin{tabular}{lccc}
            \toprule
            Comm. Type & Volume (B) & RTT \\
            \midrule
            User-server & $16384 + 4224 S + 16 N$ & $S + 1$ \\
            Intra-server & $64 N +  16 N S + 32$ & $S + 1$ \\
            \midrule
            $N$, $k'$ & Volume of US, IS & RTT \\
            \midrule
            $2^{17}$, $16$ & $2.168$MB, $35.65$MB & $14$ \\
            $2^{17}$, $128$ & $2.156$MB, $29.36$MB & $11$ \\
            $2^{20}$, $16$ & $16.86$MB, $335.5$MB & $17$ \\
            \bottomrule
        \end{tabular}
    \end{sc}
    \end{small}
    \end{center}
    \vskip -0.2in
\end{table}

\section{Related Work}

\textbf{Privacy-preserving RAG}.
Privacy-preserving RAG has the retrieval and inference phases to be protected.
During the retrieval phase, both the data owner's database and the user's prompt require protection.
{\myProj} belongs to the systems protecting the retrieval phase and protecting both the data owner and the user \cite{prag,he_prag}.
Some other systems make a weaker security assumption, protecting only the user's prompt \cite{remoterag,query_prag,query_prag1} or the data owner's database \cite{data_prag,data_prag1,data_prag2,data_prag3,data_prag4}.
The other systems focus on the inference phase \cite{ppi,ppi1,ppi4,ppi2,ppi3}.
\looseness=-1

\textbf{Cryptographic tools}.
Secure $k$-Approximate Nearest Neighbor \cite{kann0,sanns,he_kann,kann} and secure sorting \cite{secure_sort1,secure_sort2,secure_sort,secure_sort3} provide cryptographic tools for privacy-preserving RAG systems.
{\myProj}'s protocol is adapted from the secure sorting based on Function Secret Sharing (FSS) \cite{fss0,dpf,dcf_cmp,dcf} for its communication and computation efficiency.
\revision{\reviewer{astA-Q3} Classical $k$-th element or quantile selection does not directly solve our setting because plaintext selection methods are data-dependent. In encrypted retrieval, data-dependent branches can leak information unless they are made oblivious, which brings the cost back toward secure sorting or linear top-$k$ maintenance. {\myProj}'s novelty is to move the data-dependent bisection choice to the user while bounding the extra count leakage. This gives a selection-like primitive for encrypted top-$k$ retrieval without sorting all candidates.}
Some other systems use cryptographic or hardware tools such as Differential Privacy (DP) \cite{dp,data_prag1}, Homomorphic Encryption (HE) \cite{he,he_prag}, or Trusted Execution Environment (TEE) \cite{tee,tee_rag}.

\section{Conclusion}

In this paper, we have presented {\myProj}, an efficient privacy-preserving RAG service supporting arbitrary top-$k$ retrieval while protecting both the data owner's database and the user's prompt.
{\myProj} uses secret sharing on two semi-honest non-colluding servers to protect data privacy.
{\myProj} uses an interactive bisection method to select the top-$k$ relevant documents, without sorting candidate documents.
In particular, the bisection method determines a distance threshold by comparing each distance to the threshold using Distributed Comparison Functions (DCFs), while both the threshold and all distances are secret-shared.
This design enables the user to choose an arbitrary $k$, even when $k$ is large.
To defend against malicious users, {\myProj} uses restrictions and verification mechanisms to tightly bound the information leakage of the database.
Both the number of bisection iterations and the result set size are limited.
The experiments show that {\myProj} achieves higher performance with larger $k$ and is 3--300$\times$ faster than the state-of-the-art PRAG for $k = 16$--$1024$.

% Manually achieved
% \newpage

% \section*{Impact Statement}

% This paper focuses on privacy-preserving Retrieval-Augmented Generation (RAG) systems, aiming to enhance their functionality and efficiency while maintaining the same level of data privacy protection.
% The potential broader impact of this work includes advancing the deployment of RAG in privacy-critical domains, such as healthcare and finance, where data privacy is a strict requirement.

\section*{Limitations}

This paper focuses on privacy-preserving Retrieval-Augmented Generation (RAG) systems, aiming to enhance their functionality and efficiency while maintaining the same level of data privacy protection.
When privacy is not critical, standard non-secure RAG is a simpler and more efficient choice because it avoids the cryptographic overhead introduced by {\myProj}.
Therefore, {\myProj} is mainly useful for applications such as healthcare, finance, and legal services, where data privacy is a strict requirement.
\looseness=-1

\bibliography{ref}

\appendix
\crefalias{section}{appendix}

\section{Full Protocol}
\label{app:protocol}

In this section, we summarize \cref{sec:design} as {\myProj}'s full protocol.
{\myProj}'s protocol has the offline and online stages.
The offline stage includes setting up the data owner and servers.
We do not benchmark its performance because it does not affect the online performance of a user's query.
The online stage involves computation and communication between the servers and the user.

During the data owner's offline stage, the data owner has a database filled with textual documents and aims to use {\myProj}.
It pads all documents to the same length and publishes this length, which is typically required by the text retrieval implemented as single-server PIR.
It uses a public embedding model to transform each document to an embedding $\vx$.
The embedding has $m$ dimensions.
Its elements are in a prime field $\sF_p$ as described in \cref{sec:design_distance}.
The document embeddings are $\ell_2$-normalized to $l$.
The data owner publishes $l$.
The data owner shares each document embedding as $[\vx]$ and sends the shares to the servers, respectively.

During the servers' offline stage, they use a trusted dealer to run the offline stage of the dot product method.
They are configured with an iteration step upper bound $step_m$ and an upper bound $c_m$ for $c$.
They use a trusted dealer to run $Cmp.Gen$ of \cref{alg:cmp} with the intervals $[0, 2), [0, c_m + 1)$, resulting in $cmpkey^{<2}/cmpkey^{\le c_m} = (k_i^l, k_i^r, [w]_i)$ for each server $i$.
\looseness=-1

The user has a textual prompt and aims to retrieve top-$k$ relevant documents.
It uses the same public embedding model as the data owner to transform the prompt to an embedding $\vp$.
It chooses a slack value $\xi$ for $k$.
It computes $k' = k + \xi$ and $S = \lceil \log_2 (N / k') \rceil$.

The user shares the embedding as $[\vp]$ and sends $[\vp]$ to the servers, respectively, to start the online stage.
The online stage is shown as \cref{alg:full_online}.
The user receives at least $k$ and at most $k + \xi$ indices from $\vd_c$ as the results.
The user can then perform the text retrieval as described in \cref{sec:design_text} to retrieve the corresponding textual documents from these indices.
\looseness=-1

\begin{algorithm}[htb]
    \caption{Online stage of {\myProj}'s protocol}
    \label{alg:full_online}
    \begin{algorithmic}
        \STATE The servers compute $[d_j] = [\vp \cdot \vx_j]$ for $j \in [0, N)$.
        \STATE The user sets $d_k = 0$ and $[d_l, d_r] = [-ll_u, ll_u]$.
        \FOR{$step = 0$ to $S - 1$}
            \STATE The user runs $Cmp.Gen([d_k, p))$ of \cref{alg:cmp}, resulting in $cmpkey^{\ge d_k} = (k_i^l, k_i^r, [w]_i)$.
            \STATE The user sends $cmpkey^{\ge d_k}$ to the servers, respectively.%
            \STATE The servers run $Cmp.Eval([d_j])$ of \cref{alg:cmp} for $cmpkey^{\ge d_k}$ to get $\vd_c$ for $j \in [0, N)$.
            \STATE $[c] = \sum_j^N [d_{c,j}]$.
            \IF{$step \ge step_m$}
                \STATE \textbf{break}
            \ENDIF
            \STATE The servers return $c$ to the user.
            \IF{$k \le c \le k'$}
                \STATE The user saves $d_k$.
            \ELSIF{$c > k$}
                \STATE $d_l \gets d_k, d_k \gets (d_l + d_r) / 2$.
            \ELSE
                \STATE $d_r \gets d_k, d_k \gets (d_l + d_r) / 2$.
            \ENDIF
        \ENDFOR
        \STATE The user restores the saved $d_k$, runs the above loop one more time, and tells the servers to return $\vd_c$ instead of $c$.
        \STATE The servers run $Cmp.Eval([d_{c,j}])$ with $cmpkey^{<2}$ for each element of $\vd_c$.
        \STATE The servers publish $cmpkey^{<2}$'s results.
        \IF{there is any $0$ result}
            \STATE \textbf{goto abort}
        \ENDIF
        \STATE The servers run $Cmp.Eval([c])$ with $cmpkey^{\le c_m}$.
        \STATE The servers publish $cmpkey^{\le c_m}$'s results.
        \IF{there is any $0$ result}
            \STATE \textbf{goto abort}
        \ENDIF
        \STATE The servers return $\vd_c$ to the user. \textbf{return}
        \STATE \textbf{abort}: The servers frame the user as malicious and abort.
    \end{algorithmic}
\end{algorithm}

\section{Performance Analysis}
\label{app:analysis_perf}

In this section, we analyze {\myProj}'s performance theoretically.
Our performance analysis covers {\myProj}'s online stage, which starts from the user's prompt embedding and ends with the indices of the top-$k$ documents.
We quantify the analysis with the computation and communication complexity.

\textbf{Computation costs}.
For the primitives of \cref{sec:preliminary}, both DCF key generation and evaluation have $O(\lambda \log N)$ \cite{dcf}, and so do the two methods of the comparison of \cref{alg:cmp}.
In {\myProj}'s online stage, the dot product method has $O(m)$ and is executed $N$ times by the servers.
The bisection iteration is executed at most $O(\log N)$ times.
$Cmp.Gen$ is executed $S$ times by the user.
$Cmp.Eval$ is executed $SN + N + 1$ times for each server.
Therefore, the online stage has $O(\lambda \log^2 N)$ for the user and $O(mN + \lambda N \log^2 N)$ for the servers.
\looseness=-1

\textbf{Communication costs}.
For the primitives of \cref{sec:preliminary}, each key generated by DCFs has $O(\lambda \log N)$ \cite{dcf}, and so do the keys of the comparison.
The evaluation method of the comparison of \cref{alg:cmp} has $O(1)$.
In {\myProj}'s online stage, the user sends the prompt embedding share and $2S$ comparison keys, and receives $S$ counts and a share of $\vd_c$ from each server.
The dot product method has $O(m)$.
The servers run it $m$ times and $Cmp.Eval$ $N$ times as the communication costs.
Therefore, the online stage has $O(m + N + \lambda \log^2 N)$ between the user and the servers, and $O(mN + N \log N)$ between the servers.
\looseness=-1

\section{Dot Product Protocol}

We introduce the full version of the dot product protocol used in \cref{sec:design_distance} as \cref{alg:dot_prod}.

\begin{algorithm}[htb]
    \caption{Dot Product: $[c] = [\va \cdot \vb]$}
    \label{alg:dot_prod}
    \begin{algorithmic}
        \STATE \textbf{Offline Stage}:
        \STATE Sample $m$ random $r^{(a)}_n, r^{(b)}_n, r^{(ab)}_n$ for $n \in [0, m)$ s.t. $r^{(a)}_n \cdot r^{(b)}_n = r^{(ab)}_n$. Distribute $[r^{(a)}_n], [r^{(b)}_n], [r^{(ab)}_n]$.
        \STATE
        \STATE \textbf{Online Stage}:
        \FOR{$k = 0$ to $m$}
            \STATE $[d_k] = [a_k] - r^{(a)}_n, [e_k] = [b_k] - [r^{(b)}_n]$.
            \STATE Publish $d_k, e_k$.
            \STATE $[c_k]_i = [r^{(ab)}_n]_i + e_k \cdot [r^{(a)}_n] + d_k \cdot [r^{(b)}_n] + i \cdot d_k \cdot e_k$.
        \ENDFOR
        \STATE $[c] = \sum_{k = 0}^m [c_k]$. Output $[c]$.
    \end{algorithmic}
\end{algorithm}

\section{Additional Security Analysis}
\label{app:more_sec}

In this section, we analyze two additional security aspects: leakage across multiple queries and attacks enabled by the ``baseline'' leakage.
In both cases, the behavior remains within {\myProj}'s stated security goals.

When a user sends multiple queries to {\myProj}, {\myProj}'s privacy ensures that the servers learn no information about the queries themselves during the protocol.
However, the relationships among the queries, e.g., timing patterns, may reveal information about the queries outside the protocol.
Protecting this leakage is outside the scope of this work.
Existing metadata-private messaging systems \cite{express,vuvuzela,chen,nemesis} can be deployed in front of {\myProj} to hide the patterns.
Meanwhile, the database leakage to the user can accumulate additively across multiple queries.
This does not change the per-query bound: the total leakage grows additively with the number of queries and remains measurable.
This attribute is useful in deployment, e.g., the servers can maintain a leakage budget for each user and take action, such as refusing further queries, when the estimated leakage approaches a predefined threshold.
Therefore, multiple queries do not downgrade {\myProj}'s security guarantees.

{\myProj}'s ``baseline'' leakage can support the following two attacks.
A membership inference attack determines whether a specific document exists in the database, and a database distance inversion attack recovers distances between the query and the document results.
The attacks are possible because of the ``baseline'' leakage: results are returned to the user regardless of the protocol.
However, the leakage used by the attacks is still included in {\myProj}'s bounded database leakage analysis.
Therefore, the attacks do not violate our stated security goals.

\section{Experiment Settings}
\label{app:exp_settings}

We run {\myProj}'s experiments on a physical on-premises server.
The server has two AMD EPYC 7352 CPUs.
Each CPU has 24 cores with hyper-threading, for a total of 96 logical cores.
The server's memory is 188 GiB, which is sufficient for the experiments.
Its operating system is Ubuntu 24.04.
We set the embedding dimension $m = 1024$, matching the following used datasets.
We save embedding elements as 64-bit integers.
We set $p = 2^{64} - 59$, which is the largest prime number that fits within 64 bits.
We set $f = 32$, matching the precision of 32-bit floating-point numbers.
We set $\lambda = 128$ for DCFs.
We use Matyas-Meyer-Oseas one-way compression functions with AES-128 as cryptographically secure PRGs of DCFs for speed \cite{owcf}.
We use OpenSSL 3 for AES-128.
We use myl7/fss as the library for DCFs.
All data points are the mean of 7 independent runs, unless explicitly reported as the minimum or maximum.

\section{Iteration Number Experiment}
\label{app:iter_num}

\cref{tab:num_of_iterations} shows that $S$, which is the number of bisection iterations and is designed in \cref{sec:design_bisection}, is enough for the datasets.
We use 5 BEIR datasets to evaluate.
The maximum empirical number of iterations is still less than or equal to $S$, and the average number is much less than $S$.

\begin{table}[htb]
    \caption{Empirical number of iterations with $k' = 16$.}
    \label{tab:num_of_iterations}
    \begin{center}
    \begin{small}
    \begin{tabular}{lcccc}
        \toprule
        Dataset & $N$ & Avg. & Max & $S$ \\
        \midrule
        \texttt{nfcorpus} & $3{,}633$ & 6.2 & 8 & 8 \\
        \texttt{fiqa} & $57{,}638$ & 6.2 & 12 & 12 \\
        \texttt{trec-covid} & $171{,}332$ & 6.8 & 14 & 14 \\
        \texttt{quora} & $522{,}931$ & 6.2 & 15 & 15 \\
        \texttt{trec-news} & $594{,}977$ & 6.1 & 16 & 16 \\
        \bottomrule
    \end{tabular}
    \end{small}
    \end{center}
\end{table}

\section{Communication Time Experiment}
\label{app:latency}

We evaluate the communication time on a real deployment.
The servers are in GCP \texttt{asia-southeast1-c} (Singapore) and \texttt{asia-east2-c} (Hong Kong). The user is in Hong Kong.
\cref{tab:net_setup} shows the latency and bandwidth among them.
According to \cref{tab:comm}, the total communication time is 7.17s in the slowest deployment, where $(N, k')=(2^{20}, 16)$, and 1.45s in the fastest deployment when excluding $k' = 1024$, where $(N, k')=(2^{17}, 128)$.
Even after including this communication time, {\myProj} still outperforms the PRAG baseline.

\begin{table}[htb]
    \caption{Network setup: latency and bandwidth.}
    \label{tab:net_setup}
    \begin{center}
    \begin{small}
        \begin{tabular}{lcc}
            \toprule
            Link & RTT & Bandwidth \\
            \midrule
            Intra-server (SG $\leftrightarrow$ HK) & 33.3ms & 640Mbps \\
            User $\leftrightarrow$ Server (SG) & 40.2ms & 40.7Mbps \\
            User $\leftrightarrow$ Server (HK) & 8.14ms & 58.0Mbps \\
            \bottomrule
        \end{tabular}
    \end{small}
    \end{center}
\end{table}

\section{Text Retrieval Experiment}
\label{app:text_retrieval}

We instantiate SimplePIR \cite{single_pir} to evaluate the server time for retrieving textual documents in the text retrieval. We compare it with the other parts of {\myProj} and PRAG. The results in \cref{tab:text_retrieval} show that text retrieval is not the dominant part of the total server time. Even in the worst case, $(2^{17}, 128)$, its time is close to the other parts of {\myProj} and much lower than those of PRAG.

\begin{table}[htb]
    \caption{Server time (lower is better) with varying numbers of documents $N$ and retrieved documents $k'$.}
    \label{tab:text_retrieval}
    \begin{center}
    \begin{small}
    \begin{tabular}{lcccccc}
        \toprule
        $N, k'$ & {\myProj} & PRAG & Text. per doc. & Text. \\
        \midrule
        $2^{17}, 16$ & 0.751s & 2.84s & 3.60ms & 57.6ms \\
        $2^{17}, 128$ & 0.594s & 19.5s & 3.60ms & 0.461s \\
        $2^{20}, 16$ & 6.88s & 22.9s & 10.4ms & 0.166s \\
        \bottomrule
    \end{tabular}
    \end{small}
    \end{center}
\end{table}

\section{Non-Secure RAG Baseline Experiment}
\label{app:nonsecure}

We compare {\myProj} with a non-secure RAG baseline to show {\myProj}'s privacy overhead.
The non-secure baseline uses parallel merge sort with the same parallel strategy as PRAG.
It sorts all distances to do top-$k$ retrieval, so its server time does not change with $k'$.
We compare it with {\myProj}'s $k' = 16$ deployment, which is the slowest one of \cref{tab:perf_exact}.
\cref{tab:nonsecure} shows the results.

\begin{table}[htb]
    \caption{Overhead over non-secure baseline.}
    \label{tab:nonsecure}
    \begin{center}
    \begin{small}
    \begin{tabular}{lccc}
        \toprule
        $N$ & Non-secure & {\myProj} ($k'=16$) & Overhead \\
        \midrule
        $2^{14}$ & 12.2ms & 0.106s & 8.69$\times$ \\
        $2^{17}$ & 76.3ms & 0.751s & 9.84$\times$ \\
        $2^{20}$ & 638ms & 6.89s & 10.8$\times$ \\
        \bottomrule
    \end{tabular}
    \end{small}
    \end{center}
\end{table}

\end{document}